\newcommand{\csr}{committee selection rule\xspace}
\newcommand{\DiReCFp}[2]{DRCF\xspace}
\newcommand{\DiReCWD}{DRCWD\xspace}
\newcommand{\DiReCWDp}[2]{DRCWD\xspace}
\newcommand{\DiReCWDpse}[2]{DRCWD\xspace}
\newcommand{\DiReCWDpsu}[2]{DRCWD\xspace}
\newcommand{\group}[2]{A_{(#1, #2)}}
\newcommand{\groupsub}[2]{A_{(#1, #2_#1)}}
\newcommand{\population}[2]{P_{(#1, #2)}}
\newcommand{\populationsub}[2]{P_{(#1, #2_#1)}}
\newcommand{\etal}{\emph{et~al.}\xspace}
\newtheorem{definition}{Definition}
\newtheorem{theorem}{Theorem}
\newtheorem{claim}{Claim}
\DeclareMathOperator{\pos}{pos}
\title{On the Complexity of Finding a Diverse and Representative Committee using a Monotone, Separable Positional Multiwinner Voting Rule}
\author {
    Kunal Relia\thanks{This work was supported, in part, by Julia Stoyanovich's NSF grants No. 1934464 and 1916505.}\\
    {\normalfont New York University, USA}\\
    {\normalfont krelia@nyu.edu}
}
\begin{document}

\maketitle

\begin{abstract}

Fairness in multiwinner elections, a growing line of research in computational social choice, primarily concerns the use of constraints to ensure fairness. Recent work proposed a model to find a diverse \emph{and} representative committee and studied the model's computational aspects. However, the work gave complexity results under major assumptions on how the candidates and the voters are grouped. Here, we close this gap and classify the complexity of finding a diverse and representative committee using a monotone, separable positional multiwinner voting rule, conditioned \emph{only} on the assumption that P $\neq$ NP. 




\end{abstract}

\section{Introduction}
\label{sec:intro}

Fairness has recently received particular attention from the computer science research community. For context, the number of papers that contain the words ``fair'' or ``fairness'' in their titles and are published at top-tier computer science conferences like NeurIPS and AAAI  grew at an average of 38\% year on year since 2018. Moreover, the conference ACM FAccT, formerly known as FAT*, was established in 2018 to ``bring together researchers and practitioners interested in fairness, accountability, and transparency in socio-technical systems''. Similarly, there is a growing trend in the Computational Social Choice (COMSOC) community towards the use of ``fairness''\footnote{We do not consider the research on Fair Resource Allocation due to the specificity in the use of the word ``fair''.)} \cite{bredereck2018multiwinner,celis2017multiwinner,cheng2019group,hershkowitzdistrict,shrestha2019fairness}.

However, the term ``fairness'' is used in varying contexts. For example, Bredereck \etal \shortcite{bredereck2018multiwinner} and  Celis \etal \shortcite{celis2017multiwinner} call diversity of candidates in committee elections as fairness, while Cheng \etal \shortcite{cheng2019group} call representation of voters in committee elections as fairness. Such context-specific use of the term narrates an incomplete story. Hence, Relia \shortcite{relia2021dire} unified the framework using the DiRe Committee model that combines the use of diversity and representation constraints.

In line with the conceptual difference, the use of constraints leads to setups of multiwinner elections that are technically different. For instance, diversity constraint is a property of candidates and representation constraint is a property of voters. The use of these constraints are mathematically as different as the regularity and uniformity properties of hypergraphs\footnote{The mathematical differences between, say, the vertex cover problem on $d$-regular hypergraphs and $k$-uniform hypergraphs are well-known \cite{bansal2010inapproximability, feige2003vertex}}. Hence, it is important to mathematically delineate the two notions.

A starting step towards understanding the difference between diversity and representation is having a classification of complexity of using diversity and representation constraints in multiwinner elections. The classification of complexity, while technically interesting and important in itself, enables a detailed understanding of the nuances that delineate the two notions. Our main contributions are the complexity results that are based on a singular assumption that P$\neq$NP.
\section{Related Work}
\label{sec:RW}

\subsection{COMSOC and Classification of Complexity}
Computational Social Choice research has particularly focused on classifying the complexity of a known social choice problem. For instance, Konzcac and Lang \shortcite{konczak2005voting} introduced the problem of voting under partial information. This led to line of research that aimed to classify the complexity of the problem of possible winners (candidate is a winner in at least one completion) and necessary winners (candidate is a winner in all completions) over all pure positional scoring rules \cite{baumeister2012taking,betzler2010towards,chakraborty2021classifying,kenig2019complexity,xia2011determining}. 

\subsection{Multiwinner Elections, Fairness, and Complexity}


Our work primarily builds upon the literature on constrained multiwinner elections. Fairness from candidate's perspective is discussed via the use of diversity constraints over multiple attributes and its use is known to be NP-hard, which has led to approximation algorithms and matching hardness of approximation results by Bredereck \etal \shortcite{bredereck2018multiwinner} and Celis \etal \shortcite{celis2017multiwinner}. Additionally, goalbase score functions, which specify an arbitrary set of logic constraints and let the score capture the number of constraints satisfied, could be used to ensure diversity \cite{uckelman2009representing}. On the other hand, the study of fairness from voters' perspective pertains the use of representation constraints \cite{cheng2019group}. Finally, due to the hardness of using diversity constraints over multiple attributes in approval-based multiwinner elections \cite{brams1990constrained}, these have been formalized as integer linear programs (ILP) \cite{potthoff1990use}.

Overall, our work is at the intersection of the interest COMSOC researchers have on classifying the complexity and fairness in multiwinner elections. Specifically, our work is the closest to the work by Bredereck \etal \shortcite{bredereck2018multiwinner}, Celis \etal \shortcite{celis2017multiwinner}, Cheng \etal \shortcite{cheng2019group} and Relia \shortcite{relia2021dire} but we differ as we: (i) provide a complete classification of complexity of using finding a diverse and representative committee using a monotone, separable positional multiwinner voting rule, (ii) our NP-hardness results hold for \emph{all} integer values of attributes, and (iii) our NP-hardness results are conditioned \emph{only} on the assumption that P$\neq$NP. 

\section{Preliminaries and Notation}
\label{sec:prelim}


\paragraph{Multiwinner Elections.} Let $E = (C, V )$ be an election consisting of a candidate set $C = \{c_1,\dots,c_m\}$ and a voter set $V = \{v_1,\dots,v_n\}$, where each voter $v \in V$ has a preference list $\succ_{v}$ over $m$ candidates, ranking all of the candidates from the most to the least desired. $\pos_{v}(c)$ denotes the position of candidate $c \in C$ in the ranking of voter $v \in V$, where the most preferred candidate has position 1 and the least preferred has position $m$. 

Given an election $E = (C,V)$ and a positive integer $k \in [m]$ (for $k \in \mathbb{N}$, $[k] = \{1, \dots, k\}$), a multiwinner election 
selects a $k$-sized subset of candidates (or a committee) $W$ using a multiwinner voting rule $\mathtt{f}$ (discussed later) such that the score of the committee $\mathtt{f}(W)$ is the highest. 
We assume ties are broken using a pre-decided priority order.

\paragraph{Candidate Groups.} 
The candidates have $\mu$ attributes, $A_1 , . . . , A_\mu$, such that $\mu \in \mathbb{Z}$ and $\mu \geq 0$. Each attribute $A_i$, $\forall$ $i \in [\mu]$, partitions the candidates into $g_i \in [m]$ groups, $\group{i}{1} , . . . , \groupsub{i}{g} \subseteq C$. Formally, $\group{i}{j} \cap  \group{i}{j'} = \emptyset$, $\forall j,j' \in [g_i], j \ne j'$. For example, candidates may have race and gender attribute  ($\mu$ = 2) with disjoint groups per attribute, male and female ($g_1$ = 2) and Caucasian and African-American ($g_2$ = 2). Overall, the set $\mathcal{G}$ of \emph{all} such arbitrary and potentially non-disjoint groups is $\group{1}{1} , . . . , \groupsub{\mu}{g} \subseteq C$. 



\paragraph{Voter Populations.}
The voters have $\pi$ attributes, $A'_1 , . . . , A'_\pi$, such that $\pi \in \mathbb{Z}$ and $\pi \geq 0$. The voter attributes may be different from the candidate attributes. Each attribute $A'_i$, $\forall$ $i \in [\pi]$, partitions the voters into $p_i \in [n]$ populations, $\population{i}{1} , . . . , \populationsub{i}{p} \subseteq V$. Formally, $\population{i}{j} \cap  \population{i}{j'} = \emptyset$, $\forall j,j' \in [p_i], j \ne j'$. For example, voters may have state attribute  ($\pi$ = 1), which has populations California and Illinois ($p_1$ = 2). Overall, the set $\mathcal{P}$ of \emph{all} such predefined and potentially non-disjoint populations will be $\population{1}{1} , . . . , \populationsub{\pi}{p} \subseteq V$. 

Additionally, we are given $W_P$, the winning committee $\forall$ $P \in \mathcal{P}$. We limit the scope of $W_P$ to be a committee instead of a ranking of $k$ candidates because when a \csr such as CC rule is used to determine each population’s winning committee $W_P$, then a complete ranking of each population’s collective preferences is not possible. 


\paragraph{Multiwinner Voting Rules.} There are multiple types of multiwinner voting rules, also called committee selection rules. 
In this paper, we focus on committee selection rules $\mathtt{f}$ that are based on single-winner positional voting rules, and are
monotone and submodular ($\forall A \subseteq B, f(A) \leq f(B)$ and $f(B) \leq f(A) + f(B \setminus  A)$), and specifically separable. \cite{bredereck2018multiwinner,celis2017multiwinner}. 
Specifically, a special case of submodular functions are separable functions: 
score of a committee $W$ is the sum of the scores of individual candidates in the committee. Formally, $\mathtt{f}$ is separable if it is submodular and $\mathtt{f}(W) = \sum_{c \in W}^{}\mathtt{f}(c)$ \cite{bredereck2018multiwinner}. 
Monotone and separable selection rules are natural and are considered good when the goal of an election is to shortlist a set of individually excellent candidates:


\begin{definition}
\label{def-kborda} \textbf{$k$-Borda rule} The $k$-Borda rule
outputs committees of $k$ candidates with the highest Borda scores.
\end{definition}

\section{Classification of Complexity}
\label{sec:compclass}


We now study the computational complexity of \DiReCWD due to the presence of candidate attributes \emph{and} voter attributes. Specifically, we establish the NP-hardness of \DiReCWD for various settings of $\mu$, $\pi$, and $\mathtt{f}$ via reductions from a single well known NP-hard problem, namely, the vertex cover problem on 3-regular\footnote{A 3-regular graph stipulates that each vertex is connected to exactly three other vertices, each one with an edge, i.e., each vertex has a degree of 3. The VC problem on 3-regular graphs is NP-hard.}, 2-uniform\footnote{A 2-uniform hypergraph stipulates that each edge connects exactly two vertices.} hypergraphs \cite{garey1979computers, alimonti1997hardness}.

Note that the reductions are designed to conform to the real-world stipulations of candidate attributes such that (i) each candidate attribute $A_i,  \forall i \in [\mu]$,  \emph{partitions} all the $m$ candidates into two or more groups and (ii) either no two attributes partition the candidates in the same way or if they do, the lower bounds across groups of the two attributes are not the same. For stipulation (ii), note that if two attributes partition the candidates in the same way and if the lower bounds across groups of the two attributes are also the same, then mathematically they are identical attributes that can be combined into one attribute. We use the same stipulations for voter attributes.

Finally, all our results are based \emph{only} on the assumption that P$\neq$NP.

\begin{theorem}\label{lemma:DiReCWDrepdiv1}
If $\mu=1$, $\forall \pi \in \mathbb{Z} : \pi\geq1$, and   $\mathtt{f}$ is the monotone, separable function arising from an arbitrary single-winner positional scoring rule, then \DiReCWD 
is NP-hard.
\end{theorem}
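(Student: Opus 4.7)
The plan is to reduce from vertex cover on 3-regular 2-uniform hypergraphs, as flagged in the section preamble. Given an instance $(G = (U,F), K)$ with $|U| = n$ and $|F| = m$, I would build a \DiReCWD instance by introducing a vertex-candidate $c_i$ for each $u_i \in U$, adjoining a set $D$ of dummy candidates, and letting the single candidate attribute $A_1$ partition $C = \{c_i\} \cup D$ into vertex-candidates versus dummies with trivial lower bounds, so that diversity imposes no real constraint. The committee size is set to $k = K$.

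First I would encode the edges of $G$ as representation constraints. Taking $\pi = 1$ initially, I would partition the voters into populations $P_1, \ldots, P_m$, one per edge. For $e_j = \{u_a, u_b\}$, I would design the ballots of $P_j$ so that, under the within-population \csr used to compute $W_{P_j}$, the winning committee of $P_j$ contains $\{c_a, c_b\}$; combined with a representation threshold of $1$ on each $P_j$, this exactly demands that $W$ contain at least one endpoint of $e_j$, so that feasibility coincides with $W$ being a vertex cover of $G$ of size $K$.

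Next I would neutralise the arbitrary monotone, separable positional scoring rule with weight vector $\alpha_1 \geq \cdots \geq \alpha_m$. To this end I would append an auxiliary block of voters whose ballots are cyclic rotations ranking the $n$ vertex-candidates in all relative positions, followed by the dummies in a fixed tail. This symmetrisation gives every vertex-candidate the same aggregate positional score $S$ while leaving every dummy strictly below $S$. Taking the block large enough so that $S$ dominates any dummy's score forces every score-optimal feasible committee to consist of exactly $k$ vertex-candidates, and the target score $kS$ is attained if and only if the chosen $K$ vertex-candidates correspond to a vertex cover. For $\pi > 1$, I would append $\pi - 1$ additional voter attributes whose populations are trivially satisfied (e.g., a single population per attribute with $W_P = C$ and threshold $0$), perturbing the partitions or thresholds slightly to respect the stipulation that no two voter attributes be identical.

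The step I expect to be the main obstacle is the joint design of the two voter blocks: the cyclic symmetrisation must equalise the scores of vertex-candidates under \emph{every} admissible weight vector, while simultaneously the edge-population ballots must force $\{c_a, c_b\} \subseteq W_{P_j}$ under the chosen \csr, all in the presence of the global pre-decided tie-breaking order. Balancing these requirements so that no non-vertex-cover committee sneaks in via ties or stray dummy scores is where the bulk of the technical care will be needed.
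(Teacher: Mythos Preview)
Your proposal is correct and follows the same underlying route as the paper: reduce from vertex cover on 3-regular graphs, encode each edge as a voter population whose representation constraint forces at least one endpoint into the committee, and arrange the positional scores so that the optimal feasible committee must consist of vertex-candidates. The paper's proof sketch is packaged differently: it invokes an existing reduction (Theorem~5 of \cite{relia2021dire}, which already establishes hardness for $\mu=0$, $\pi\geq 1$ with $m+n\cdot m$ candidates) and, rather than making the single diversity attribute vacuous as you do, adds one extra dummy candidate $a$ placed alone in the second group, sets $l^D_G=1$ on both groups so that $a$ is forced into $W$, ranks $a$ first in every ballot so that $a\in W_P$ for every $P$, and accordingly bumps the representation threshold to $2$ and the committee size to $k+1$. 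Your from-scratch construction with trivial diversity bounds achieves the same effect.

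Two small points to tighten. First, your claim that ``feasibility coincides with $W$ being a vertex cover'' is not literally true: since $W_{P_j}$ is a $k$-sized committee, it contains $k-2$ candidates beyond $c_a,c_b$, and a feasible $W$ could in principle meet $|W_{P_j}\cap W|\geq 1$ via one of those extras. It is only your score-optimality argument in the next paragraph (dummies strictly below vertex-candidates) that rules this out, so the equivalence is with \emph{score-optimal} feasibility, not feasibility alone. Second, under $\pi=1$ every voter must sit in some population, so your auxiliary cyclic-symmetrisation block needs a home; placing it in an extra population $P_{m+1}$ with threshold $0$ is the obvious fix, but it should be stated.
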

\begin{proof}[Proof Sketch]
The proof extends from the reduction used in the proof of Theorem 5 in \cite{relia2021dire} (full-version). Specifically, we have $m+ (n \cdot m) + 1$ candidates such that $m+ (n \cdot m)$ from the previous reduction are in one group and we have a dummy candidate $a$ in the second group. The diversity constraints are set to 1 for both the groups. The voters, voter populations, and rankings are the same as before except candidate $a$ is ranked first by all the voters. The representation constraints are set to 2 for all voter populations. The committee size is set to $k+1$. It is easy to see that the proof of correctness follows the proof of correctness of Theorem 5 \cite{relia2021dire} (full-version) with an addition of always selecting the candidate $a$.
\end{proof}
\begin{theorem}\label{thm:DiReCWDdivrep21}
If $\mu=2$, $\forall \pi \in \mathbb{Z} : \pi\geq 1$, and   $\mathtt{f}$ is the monotone, separable function arising from an arbitrary single-winner positional scoring rule, then \DiReCWD 
is NP-hard.
\end{theorem}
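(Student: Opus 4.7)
The plan is to extend the reduction from the proof of Theorem~\ref{lemma:DiReCWDrepdiv1} by introducing a second candidate attribute $A_2$ whose diversity constraints are automatically met by any committee already feasible under the $\mu=1$ reduction, so the hardness carries over with essentially no new combinatorial work. Concretely, I would re-use the Theorem~\ref{lemma:DiReCWDrepdiv1} construction verbatim: reduce from vertex cover on 3-regular, 2-uniform hypergraphs, take $m + (n \cdot m) + 1$ candidates including the dummy candidate $a$, keep the voters, voter populations, and preference lists (with $a$ ranked first by every voter), set every population's representation lower bound to $2$, fix the committee size at $k+1$, and keep attribute $A_1$ with $\group{1}{1} = \{a\}$ and $\group{1}{2}$ containing the remaining $m + (n \cdot m)$ candidates, each with a diversity lower bound of $1$.

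Next, I would introduce $A_2$ with a partition that is demonstrably different from $A_1$'s, so the paper's stipulations on candidate attributes are obeyed. A convenient choice is to pick an arbitrary candidate $b \ne a$ from $\group{1}{2}$, set $\group{2}{1} = \{b\}$ and $\group{2}{2} = C \setminus \{b\}$, and assign diversity lower bounds $0$ to $\group{2}{1}$ and $1$ to $\group{2}{2}$. Because $a \in \group{2}{2}$ and every committee that survives $A_1$'s constraint on $\group{1}{1}$ already contains $a$, the $A_2$ bounds are met for free; meanwhile $A_2$ is a bona fide second attribute since its partition (distinguishing $\{b\}$ rather than $\{a\}$) genuinely differs from that of $A_1$, so stipulation (ii) is satisfied regardless of the numerical bounds I pick.

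With the extended instance in hand, I would argue correctness in the two familiar directions, following Theorem~\ref{lemma:DiReCWDrepdiv1} essentially verbatim. In the forward direction, a size-$k$ vertex cover in the input graph yields a feasible, highest-$\mathtt{f}$-score committee of size $k+1$ in the base construction; this committee contains $a$, so the $A_2$ bound on $\group{2}{2}$ is satisfied and the bound on $\group{2}{1}$ is vacuous. In the reverse direction, any feasible highest-scoring committee in the extended instance must still contain $a$ (by the $A_1$ constraint on $\group{1}{1}$), meet every population's representation bound of $2$, and have exactly $k+1$ members; removing $a$ leaves $k$ candidates that encode a vertex cover exactly as in Theorem~\ref{lemma:DiReCWDrepdiv1}. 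The reduction is polynomial, and the argument is insensitive to the choice of monotone, separable positional scoring rule $\mathtt{f}$.

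The chief subtlety to monitor is the joint interaction of the two attributes' diversity constraints. One has to verify that the partitions of $A_1$ and $A_2$ really differ (they do, since $\{a\}$ and $\{b\}$ are distinct singletons) and that no combination of lower bounds across the two attributes inadvertently tightens feasibility beyond what the $\mu=1$ instance already demands; once this bookkeeping is checked, NP-hardness lifts immediately to the $\mu = 2$ setting for every $\pi \ge 1$.
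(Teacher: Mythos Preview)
Your proposal is correct and follows the same high-level strategy as the paper: extend the $\mu=1$ reduction of Theorem~\ref{lemma:DiReCWDrepdiv1} by adjoining a second candidate attribute whose diversity constraints are automatically met by any feasible committee, so the correctness argument carries over unchanged.

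The only difference is in the concrete choice of the second partition. The paper takes $A_2$ to split the candidate set into the $m$ vertex-candidates versus the $(n\cdot m)+1$ dummies (including $a$), with lower bound $1$ on each block; both bounds are then automatically met because any vertex cover is nonempty and because $a$ is always selected. You instead isolate an arbitrary singleton $\{b\}$ with lower bound $0$ and put lower bound $1$ on its complement. Both choices yield genuinely distinct partitions from $A_1$ and add no new combinatorial content; the paper's version has the mild cosmetic advantage of avoiding a zero lower bound, while yours makes the ``vacuously satisfied'' nature of $A_2$ even more explicit. Either way the reduction and the proof of correctness go through exactly as in Theorem~\ref{lemma:DiReCWDrepdiv1}.
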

\begin{proof}[Proof Sketch]
We now build upon the reduction used in the proof of Corollary~\ref{lemma:DiReCWDrepdiv1}. The only change in the reduction is the addition of the second candidate attribute. In addition two groups already present under one attribute, we create two more groups under the second attribute such that the first group contains the $m$ candidates from $C$ and the second group contains $(n\cdot m) + 1$ candidates ($(n\cdot m)$ dummy candidates $D$ and 1 dummy candidate $a$). The diversity constraints are set to 1 for both the groups. The voters, voter populations, rankings, and the representation constraints are the same as before. The committee size is again set to $k+1$. It is easy to see that the proof of correctness follows the proof of correctness of Corollary~\ref{lemma:DiReCWDrepdiv1}.
\end{proof}
\begin{theorem}\label{thm:DiReCWDdivrep31odd}
If $\forall \mu \in \mathbb{Z} : \mu\geq 3$ and $\mu$ is an odd number, $\forall \pi \in \mathbb{Z} : \pi\geq 1$, and   $\mathtt{f}$ is the monotone, separable function arising from an arbitrary single-winner positional scoring rule, then \DiReCWD 
is NP-hard. 
\end{theorem}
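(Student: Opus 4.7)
The plan is to extend the reduction from Theorem~\ref{lemma:DiReCWDrepdiv1} by appending $\mu - 1$ additional candidate attributes; since $\mu \geq 3$ is odd, this amounts to layering an even number (at least two) of extra attributes on top of the base $\mu = 1$ construction. For each $i \in \{2, \ldots, \mu\}$, I introduce a fresh dummy candidate $a_i$ and define attribute $A_i$ so that it partitions the candidate set into the singleton $\{a_i\}$ (with diversity lower bound $1$) and its complement (with lower bound $1$, trivially met). Every new dummy is inserted at the top of every voter's preference list under a fixed tie-break, the committee size is raised to $k + \mu$, and the voter populations, the voter attributes, and all representation constraints are inherited from the $\mu = 1$ case without modification.

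The correctness argument follows the template of the previous sketches. The singleton diversity lower bounds jointly force $a$ and every $a_i$ into any feasible committee, accounting for $\mu$ of the $k + \mu$ available seats. Because these dummies occupy the top positions in every voter's ranking, their positional scores are identical across all feasible committees; by separability of $\mathtt{f}$, their total contribution is an additive constant, and so the score-maximizing choice of the remaining $k$ seats coincides with the optimum of the base reduction. That residual instance is exactly the one Theorem~\ref{lemma:DiReCWDrepdiv1} reduces to vertex cover on 3-regular, 2-uniform hypergraphs, yielding NP-hardness.

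The main obstacle is the paper's stipulation that any two candidate attributes must induce different partitions or, failing that, different per-group lower bounds. My construction sidesteps this by design, since each added $\{a_i\}$ is a different singleton and therefore all $\mu$ induced partitions are pairwise distinct. A secondary subtlety is verifying that the representation constraints remain simultaneously satisfiable once the committee grows by $\mu - 1$ seats; this is immediate because every base-case winning subcommittee $W_P$ is preserved within the enlarged committee, and the dummies, by virtue of appearing identically at the top of every voter's ranking, do not perturb any voter population's representation requirement.
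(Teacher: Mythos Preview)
Your reduction has a genuine gap in the ``no'' direction. In the $\mu=1$ base case the representation bound is $l_P^R=2$: the single forced dummy $a$ contributes one unit of overlap between $W$ and each $W_P$, and the second required unit is what encodes the vertex-cover condition. In your extension you force $\mu\ge 3$ dummies $a,a_2,\dots,a_\mu$ into every feasible committee; since these dummies sit at the top of every voter's list, they also appear in every population's winning committee $W_P$. Hence $|W_P\cap W|\ge \mu\ge 3>2$ automatically, and the representation constraints you ``inherit without modification'' are now vacuous. Once the constraints are trivially met, the score-maximizing committee of size $k+\mu$ is just the $\mu$ dummies together with the $k$ highest-scoring remaining candidates, which is computable in polynomial time regardless of whether the underlying vertex-cover instance is a yes or a no instance. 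The reduction therefore fails to transfer hardness. (The obvious repair is to raise $l_P^R$ to $\mu+1$, but that is not what you wrote, and you would also need to recheck that $W_P$ has the right size and content under the enlarged committee.)

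For comparison, the paper's proof takes a very different route: it does not layer trivial singleton attributes on top of the $\mu=1$ reduction. Instead it reduces directly from vertex cover on $3$-regular graphs and uses the three incident edges of each vertex as three of that candidate's $\mu$ groups, padding out the remaining $\mu-3$ attributes via carefully engineered dummy blocks ($B_1$ with sets $T_1,T_2,T_3$, and $B_2$ consisting of $K_{\mu+1}$-style gadgets) so that every candidate has exactly $\mu$ two-element groups and the diversity constraints themselves encode the covering condition. The representation bound is correspondingly scaled to $1+m\mu^2-3m\mu+2mn\mu$. So even a corrected version of your argument would be a genuinely different, and considerably shorter, proof strategy than the one in the paper.
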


\begin{proof}
We reduce an instance of vertex cover (VC) problem to an instance of \DiReCWD.

\paragraph{\underline{PART I: Construction}}

\paragraph{Candidates:} We have one candidate $c_i$ for each vertex $x_i \in X$, and 
$2 \mu^2 m - 7 \mu m + 2 \mu m n + 2 m n + 3 m$ dummy candidates $d \in D$ where $m$ corresponds to the number of vertices in the graph $H$, $n$ corresponds to the number of edges in the graph $H$, and $\mu$ is a positive, odd integer (hint: the number of candidate attributes). Specifically, we divide the dummy candidates into two types of blocks:
\begin{itemize}
    \item Block type $B_1$ consists of $m\mu-3m$ blocks and each block consists of three sets of candidates:
    \begin{itemize}
        \item Set $T_1$ consists of single dummy candidate, $d_{i, 1}^{T_1} \in T_1$, $\forall$ $i \in [1,$ $m\mu-3m]$.
        \item Set $T_2$ consists of $\mu-1$ dummy candidates, $d_{i, j}^{T_2} \in T_2$, $\forall$ $i \in [1,$ $m\mu-3m]$, $j \in [1,$ $\mu-1]$. 
        \item Set $T_3$ consists of $\mu-1$ dummy candidates, $d_{i, j}^{T_3} \in T_3$, $\forall$ $i \in [1,$ $m\mu-3m]$, $j \in [1,$ $\mu-1]$. 
    \end{itemize}
    \item Block type $B_2$ consists of $2 m n$ blocks and each block consists of one set of candidates: 
    \begin{itemize}
        \item Set $T_4$ consists of $\mu+1$ dummy candidates, $d_{i, j}^{T_4} \in T_4$, $\forall$ $i \in [1,$ $2 m n]$, $j \in [1,$ $\mu+1]$. 
    \end{itemize}
\end{itemize}

Hence, there are $(m\mu-3m)\cdot(1+\mu-1+\mu-1)$ dummy candidates in blocks of type $B_1$ and $(2 m n) \cdot(\mu+1)$ dummy candidates in blocks of type $B_2$. This results in a total of $2 \mu^2 m - 7 \mu m + 3 m$ dummy candidates of type $B_1$ and $2\mu mn + 2mn$ dummy candidates of type $B_2$. Thus, $|D|$ =  $2 \mu^2 m - 7 \mu m + 2\mu mn + 2mn + 3 m$. Note that the types of blocks and sets discussed above are different and independent from the candidate groups (discussed later) that are used to enforce diversity constraints.

Overall, we set $A$ = \{$c_1, \dots, c_m$\} and the dummy candidate set $D$ = \{$d_1, \dots, d_{2 \mu^2 m - 7 \mu m + 2 \mu m n + 2 m n + 3 m}$\}. Hence, the candidate set $C$ = $A \cup D$ is of size $|C|=$ $2 \mu^2 m - 7 \mu m + 2 \mu m n + 2 m n + 4 m$ candidates. 

\paragraph{Committee Size:} We set the target committee size to be $k + m \mu^2 + 2mn\mu - 3m\mu $.

\paragraph{Candidate Groups:} We now divide the candidates in $C$ into groups such that each candidate is part of $\mu$ groups as there are $\mu$ candidate attributes. 

\underline{Candidates in $A$:}
Each edge $e \in E$ that connects vertices $x_i$ and $x_j$ correspond to a candidate group $G \in \mathcal{G}$ that contains two candidates $c_i$ and $c_j$. As our reduction proceeds from a 3-regular graph, each vertex is connected to three edges. This corresponds to each candidate $c \in A$ having three attributes and thus, belonging to three groups. 

Additionally, each candidate $c \in A$ is part of $\mu-3$ groups where each group is with the one candidate from Set $T_1$ of block type $B_1$. Specifically, candidate $c_i$ forms a group each with $d_{j, 1}^{T_1} \in T_1$ : $j\in[1+(i-1)(\mu-3),$ $i(\mu-3)]$. Hence, as each one of the $m$ candidates form $\mu-3$ groups, we have a total of $m(\mu-3)=m\mu-3m$ blocks of type $B_1$ consisting of $m(\mu-3)$ candidates in Set $T_1$. Overall, each candidate $c \in A$ has $\mu$ attributes and is part of $\mu$ groups.

\begin{table*}
\centering
\resizebox{\textwidth}{!}{\begin{tabular}{c|lllllllllllllll}
$v_1^{1}$, \dots, $v_1^{n}$&$ c_{i_1} \succ c_{j_1} $ & $\succ$ & $ U_{2}$ & $ \succ$&$ U_{3} $ & $\succ$&$ U_{4}^{v_1} $ & $\succ$&$ U_{5}^{v_1} $ & $\succ$&$ U_{6}^{v_1} $ & $\succ$&$ U_7$  \\[5pt]

$v_2^{1}$, \dots, $v_2^{n}$&$ c_{j_1} \succ c_{i_1} $ & $\succ$ & $ U_{2}$ & $ \succ$&$ U_{3} $ & $\succ$&$ U_{4}^{v_2} $ & $\succ$&$ U_{5}^{v_2} $ & $\succ$&$ U_{6}^{v_2} $ & $\succ$&$ U_7$  \\[5pt]

$v_3^{1}$, \dots, $v_3^{n}$&$ c_{i_2} \succ c_{j_2} $ & $\succ$ & $ U_{2}$ & $ \succ$&$ U_{3} $ & $\succ$&$ U_{4}^{v_3} $ & $\succ$&$ U_{5}^{v_3} $ & $\succ$&$ U_{6}^{v_3} $ & $\succ$&$ U_7$  \\[5pt]

$v_4^{1}$, \dots, $v_4^{n}$&$ c_{j_2} \succ c_{i_2} $ & $\succ$ & $ U_{2}$ & $ \succ$&$ U_{3} $ & $\succ$&$ U_{4}^{v_4} $ & $\succ$&$ U_{5}^{v_4} $ & $\succ$&$ U_{6}^{v_4} $ & $\succ$&$ U_7$  \\[5pt]

\vdots&&&&&&&\\

$v_{2n-1}^{1}$, \dots, $v_{2n-1}^{n}$&$ c_{i_n} \succ c_{j_n} $ & $\succ$ & $ U_{2}$ & $ \succ$&$ U_{3} $ & $\succ$&$ U_{4}^{v_{2n-1}} $ & $\succ$&$ U_{5}^{v_{2n-1}} $ & $\succ$&$ U_{6}^{v_{2n-1}} $ & $\succ$&$ U_7$  \\[5pt]

$v_{2n}^{1}$, \dots, $v_{2n}^{n}$&$ c_{j_n} \succ c_{i_n} $ & $\succ$ & $ U_{2}$ & $ \succ$&$ U_{3} $ & $\succ$&$ U_{4}^{v_{2n}} $ & $\succ$&$ U_{5}^{v_{2n}} $ & $\succ$&$ U_{6}^{v_{2n}} $ & $\succ$&$ U_7$  \\[5pt]

\end{tabular}}
\caption{An instance of preferences of voters created in the reduction for the proof of Theorem~\ref{thm:DiReCWDdivrep31odd}.}
\label{tab:voterRankingsDivRepConsodd}
\end{table*}

\underline{Candidates of Block type $B_1$:}
Each candidate in the block type $B_1$ has $\mu$ attributes and are grouped as follows:
\begin{itemize}
    \item Each dummy candidate $d_{j, 1}^{T_1} \in T_1$ : $j\in[1+(i-1)(\mu-3),$ $i(\mu-3)]$ is in the same group as candidate $c_i \in A$. Additionally, it is also in $\mu-1$ groups, individually with each of $\mu-1$ dummy candidates, $d_{j,o}^{T_2} \in T_2$, $\forall o \in [1,$ $\mu-1]$. Thus, the each dummy candidate $d_{j,1}^{T_1} \in T_1$ has $\mu$ attributes and is part of $\mu$ groups.
    \item For each dummy candidate $d_{j,i}^{T_2} \in T_2$ : $j\in[1+(i-1)(\mu-3),$ $i(\mu-3)]$ and $\forall i \in [1,$ $\mu-1]$, it is in the same group as $d_{j,1}^{T_1}$ as described in the previous point. It is also in $\mu-1$ groups, individually with each of $\mu-1$ dummy candidates, $d_{j,i}^{T_3} \in T_3$. Thus, each dummy candidate $d_{j,i}^{T_2} \in T_2$ has $\mu$ attributes and is part of $\mu$ groups.
    \item For each dummy candidate $d_{j,i}^{T_3} \in T_3$ :  $j\in[1+(i-1)(\mu-3),$ $i(\mu-3)]$ and $\forall i \in [1,$ $\mu-1]$, it is in $\mu-1$ groups, individually with each of $\mu-1$ dummy candidates, $d_{j,i}^{T_2} \in T_2$, as described in the previous point. Next, note that when $\mu$ is an odd number, $\mu-1$ is an even number, which means Set $T_3$ of each block has an even number of candidates. We randomly divide $\mu-1$ candidates into two partitions. Then, we create $\frac{\mu-1}{2}$ groups over one attribute where each group contains two candidates from Set $T_3$ such that one candidate is selected from each of the two partitions without replacement. Thus, each pair of groups is mutually disjoint. Thus, each dummy candidate $d_{j,i}^{T_3} \in T_3$ is part of exactly one group that is shared with exactly one another dummy candidate $d_{j',i}^{T_3} \in T_3$ where $j \neq j'$. Overall, this construction results in one attribute and one group for each dummy candidate $d_{j,i}^{T_3} \in T_3$. Hence, each dummy candidate $d_{j,i}^{T_3} \in T_3$ has $\mu$ attributes and is part of $\mu$ groups.
\end{itemize}

\underline{Candidates of Block type $B_2$:}
Finally, we assign candidates from the Block type $B_2$ to groups. Each of the dummy candidate in Set $T_4$ of each of the $2mn$ blocks is grouped individually with each of the remaining $\mu$ dummy candidates in Set $T_4$ of a block. Formally, $\forall i \in [1,$ $2mn]$, $\forall j \in [1,$ $\mu+1]$, $d_{i, j}^{T_4} \in T_4$ and $\forall o \in [1,$ $\mu+1]$, $d_{i, o}^{T_4} \in T_4$ such that $j\neq o$, $d_{i, j}^{T_4}$ and $d_{i, o}^{T_4}$ are in the same group. Hence, each block consists of $\mu+1$ candidates and each candidate is grouped pairwise with each of the remaining $\mu$ candidates. This means that each dummy candidate $d_{i, j}^{T_4} \in T_4$ has $\mu$ attributes and is part of $\mu$ groups\footnote{This setup of Block type $B_2$ is analogous to creating $2mn$ $K_{\mu+1}$ graphs, i.e., a total of $2mn$ complete (2-uniform, $\mu$-regular) (hyper-)graphs, each with $\mu+1$ vertices.}. 

\paragraph{Diversity Constraints:} We set the lower bound for each candidate group as follows:
\begin{itemize}
    \item $l^D_G=1$ for all $G \in \mathcal{G}$ : $G\cap A\neq \phi$, which corresponds that each vertex in the vertex cover should be covered by some chosen edge.
    \item $l^D_G=1$ for all $G \in \mathcal{G}$ such that at least one of the following holds:
    \begin{itemize}
        \item $\forall i \in [1,$ $m\mu-3m]$, $G \cap d_{i, 1}^{T_1} \neq \phi$.
        \item $\forall i \in [1,$ $m\mu-3m]$, $\forall j \in [1,$ $\mu-1]$, $G \cap d_{i, j}^{T_2} \neq \phi$.
        \item $\forall i \in [1,$ $m\mu-3m]$, $\forall j \in [1,$ $\mu-1]$, $G \cap d_{i, j}^{T_3} \neq \phi$.
    \end{itemize}
    \item $l^D_G=2$ for all $G \in \mathcal{G}$ such that $\forall i \in [1,$ $2mn]$, $\forall j \in [1,$ $\mu+1]$, $G \cap d_{i, j}^{T_4} \neq \phi$ and $\forall i \in [1,$ $2mn]$, $G\cap d_{i, \mu+1}^{T_4}=\phi$. 
    \item $l^D_G=1$ for all $G \in \mathcal{G}$ such that $\forall i \in [1,$ $2mn]$, $\forall j \in [1,$ $\mu+1]$, $G \cap d_{i, j}^{T_4} \neq \phi$.
\end{itemize}

In summary, $\forall$ $G \in \mathcal{G}$,

\[
    l^D_G = 
\begin{cases}
    2, & \text{if } \forall i \in [1, 2mn], \forall j \in [1, \mu], G \cap d_{i, j}^{T_4} \neq \phi \text{ and } \\
    & \forall i \in [1, 2mn], G\cap d_{i, \mu+1}^{T_4}=\phi\\
    1,              & \text{otherwise}
\end{cases}
\]

\paragraph{Voters and Preferences:} We now introduce $2n^2$ voters, $2n$ voters for each edge $e \in E$. More specifically, an edge $e \in E$ connects vertices $x_i$ and $x_j$. Then, the corresponding $2n^2$ voters $v \in V$ rank the candidates as follows:

\begin{itemize}
    \item \textbf{First $2$ positions} - Set $U_1$ : The top two positions are occupied by candidates $c_i$ and $c_j$ that correspond to vertices $x_i$ and $x_j$.  For voter $v^b_a$  where $a \in [2n]$ and $b \in [n]$, we denote the candidates $c_i$ and $c_j$ as $c_{i_{a}}$ and $c_{j_{a}}$. Note that we have two voters that correspond to each edge and hence, two voters $v^b_a$ and $v^b_{a-1}$ $\forall a \in [2n]_{\text{even}}$, rank the same two candidates in the top two positions. Specifically, voter $v^b_a$ ranks the two candidates based on the non-increasing order of their indices. Voter $v^b_{a-1}$ ranks the two candidates based on the non-decreasing order of their indices.
    \item \textbf{Next $m\mu^2-3m\mu$ positions} - Set $U_2$ : All of the $m\mu-3m$ dummy candidates from Set $T_1$ and all of the $m\mu^2-4m\mu+3m$ dummy candidates from Set $T_3$ are ranked based on non-decreasing order of their indices. Note that $m\mu-3m + m\mu^2-4m\mu+3m$ = $m\mu^2-3m\mu$.
    \item \textbf{Next $2mn\mu$ positions} - Set $U_3$ : $2mn\mu$ of the $2mn\mu+2mn$ dummy candidates from Set $T_4$ are ranked based on non-decreasing order of their indices. Specifically, a dummy candidate $d_{o, j}^{T_4}$ from Set $T_4$ is selected $\forall o \in [1,$ $2mn]$ and $\forall j \in [1,$ $\mu]$. Note that dummy candidate from Set $T_4$ of the kind $d_{o, \mu+1}^{T_4}$, for all $\forall o \in [1,$ $2mn]$, is not ranked in these positions. 
    \item \textbf{Next $m$ positions} - Set $U_4$ : $m$ out of $2mn$ unranked  dummy candidates from Set $T_4$ of the kind $d_{o, \mu+1}^{T_4}$, for all $\forall o \in [1,$ $2mn]$, are ranked in the next $m$ positions based on non-decreasing order of their indices. Specifically, the $m$ candidates that are ranked satisfy $(o\mod2n)+1=a$. Note that for each type of voter of the kind $v_i$, these $m$ candidates are distinct as shown below. Hence, for all pairs of voters of the kind $v_i, v_j \in V : v_i \neq v_j$, we know that $U_4^{v_i} \cap U_4^{v_j}=\phi$.
    \item \textbf{Next $m-2$ positions} - Set $U_5$ : The $m-2$ candidates from $A$, which are not ranked in the top two positions, are ranked based on non-decreasing order of their indices. Formally, $U_5=A \setminus \{c_{i_a}, c_{j_a}\}$.
    \item \textbf{Next $2mn-m$ positions} - Set $U_6$ : $2mn-m$ out of $2mn$ unranked dummy candidates from Set $T_4$ of the kind $d_{o, \mu+1}^{T_4}$, for all $\forall o \in [1,$ $2mn]$, are ranked based on non-decreasing order of their indices. Specifically, the candidates that are ranked satisfy $(o\mod2n)+1\neq a$.
    \item \textbf{Next $m\mu^2-4m\mu+3m$ positions} - Set $U_7$ : All of the $m\mu^2-4m\mu+3m$ dummy candidates from Set $T_2$ are ranked based on non-decreasing order of their indices.
\end{itemize}

More specifically, the voters rank the candidates as shown in Table~\ref{tab:voterRankingsDivRepConsodd}. The sets without a superscript (e.g., $U_2$) denote the candidate rankings  that are the same for all voters.

\paragraph{Voter Populations:} We now divide the voters in $V$ into populations such that each voter is part of $\pi$ populations as there are $\pi$ voter attributes. Specifically, the voters are divided into disjoint population over one or more attributes when $\forall \pi \in \mathbb{Z}, \pi\geq1$. The voters are divided into populations as follows: $\forall x \in [\pi]$, $\forall y \in [2n]$, $\forall z \in [n]$, voter $v_y^z \in V$ is part of a population $P\in\mathcal{P}$ such that $P$ contains all voters with the same ($z \mod x$) and $y$. 
Each voter is part of $\pi$ populations.

\paragraph{Representation Constraints:} We set the lower bound for each voter population as follows: $\forall P \in \mathcal{P}$, $l^R_P$ = $1+m\mu^2-3m\mu+2mn\mu$.

This completes our construction for the reduction, which is a polynomial time reduction in the size of $n$ and $m$. 

\paragraph{\underline{PART II: Proof of Correctness}}

\begin{claim}

We have a vertex cover $S$ of size at most $k$ that satisfies $e \cap S\neq\phi$ $\forall$ $e \in E$ if and only if we have a committee $W$ of size at most $k + m \mu^2 + 2mn\mu - 3m\mu $ such that $\forall$ $G \in \mathcal{G}$, $|G \cap W|\geq l^D_G$, $\forall$ $P \in \mathcal{P}$, $|W_P \cap W|\geq l^R_P$, and $\mathtt{f}(W) = \max_{W' \in \mathcal{W}} \mathtt{f}(W')$\footnote{The ties are broken based on rankings of the candidate such that higher ranked candidates are chosen over the lower ranked candidates. Note that with some trivial changes in voter preferences, we can ensure that there are no ties. Specifically, we create enough copies of each voter to ensure that each candidate not in the top $k + m \mu^2 + 2mn\mu - 3m\mu $ positions occupy the last position once.} where $\mathcal{W}$ is a set of committees that satisfy all constraints. 
\end{claim}

($\Rightarrow$) If the instance of the VC problem is a yes instance, then the corresponding instance of \DiReCWD is a yes instance.

\paragraph{Diversity constraints are satisfied:}
Firstly, each and every candidate group will have at least one of their members in the winning committee $W$, i.e., $|G \cap W|\geq l^D_G$ for all $G \in \mathcal{G}$.

More specifically, for each of the $m\mu-3m$ blocks of type $B_1$ of candidates, we select: 
\begin{itemize}
    \item one dummy candidate from Set $T_1$
    \item all $\mu-1$ dummy candidates from Set $T_3$
\end{itemize} 
This helps to satisfy the condition $l^D_G=1$ for all $G \in \mathcal{G}$ such that at least one of the following holds:
    \begin{itemize}
        \item $\forall i \in [1,$ $m\mu-3m]$, $G \cap d_{i, 1}^{T_1} \neq \phi$.
        \item $\forall i \in [1,$ $m\mu-3m]$, $\forall j \in [1,$ $\mu-1]$, $G \cap d_{i, j}^{T_2} \neq \phi$.
        \item $\forall i \in [1,$ $m\mu-3m]$, $\forall j \in [1,$ $\mu-1]$, $G \cap d_{i, j}^{T_3} \neq \phi$.
    \end{itemize}
Thus, we select $\mu$ candidates from $\mu-3$ blocks for each of the $m$ candidates that correspond to vertices in the vertex cover. This results in $(\mu \cdot (\mu-3) \cdot m) = m \mu^2 - 3m\mu$ candidates in the committee. 

Next, for each of the $2mn$ blocks of type $B_2$ of candidates, we select: 
\begin{itemize}
    \item $\mu$ dummy candidates $d_{i, j}^{T_4}$ from Set $T_4$ such that $i \in [1,$ $2 m n]$ and $j \in [1,$ $\mu]$.
\end{itemize} 
This helps to satisfy the conditions : $l^D_G=2$ for all $G \in \mathcal{G}$ such that $\forall i \in [1,$ $2mn]$, $\forall j \in [1,$ $\mu+1]$, $G \cap d_{i, j}^{T_4} \neq \phi$ and $\forall i \in [1,$ $2mn]$, $G\cap d_{i, \mu+1}^{T_4}=\phi$ and $l^D_G=1$ for all $G \in \mathcal{G}$ such that $\forall i \in [1,$ $2mn]$, $\forall j \in [1,$ $\mu+1]$, $G \cap d_{i, j}^{T_4} \neq \phi$. 
Overall, we select $\mu$ candidates from $2mn$ blocks. This results in additional $2mn\mu$ candidates in the committee. 

Finally, for groups that do not contain any dummy candidates, select $k$ candidates $c \in A$ that correspond to $k$ vertices $x \in X$ that form the vertex cover. These candidates satisfy the remainder of the constraints. Specifically, these $k$ candidates satisfy $|G \cap W|\geq 1$ for all the candidate groups that do not contain any dummy candidates. Hence, we have a committee of size $k + m \mu^2 + 2mn\mu - 3m\mu $.

\paragraph{Representation constraints are satisfied:}
Next, if the instance of the VC problem is a yes instance, then we have a winning committee $W$ of size $k + m \mu^2 + 2mn\mu - 3m\mu$ that consists of $k$ candidates corresponding to the VC and $m \mu^2 + 2mn\mu - 3m\mu$ candidates from Sets $U_2$ and $U_3$. Also, each and every population's winning committee, $W_P$ for all $P \in \mathcal{P}$, will have at least $1+m\mu^2-3m\mu+2mn\mu$ of their members in the winning committee $W$ such that $|W_P \cap W|\geq 1+m\mu^2-3m\mu+2mn\mu$, for all $P \in \mathcal{P}$, because:
\begin{itemize}
    \item as we have a yes instance of the VC problem, one of the two corresponding candidates occupying the first two positions of the ranking will be on the committee.
    \item each of the $m\mu^2-3m\mu$ candidates from Set $U_2$ will be on the committee
    \item each of the $2mn\mu$ candidates from Set $U_3$ will be on the committee
\end{itemize}

By construction, candidates in Set $U_2$ and candidates in Set $U_3$ will always be the part of each population's winning committee. Additionally, candidates in Set $U_2$ are the $\mu$ candidates selected from each of the $m\mu-3m$ blocks of the type $B_1$ and these candidates are the same across all voter populations. Also, the candidates in Set $U_3$ are the $\mu$ candidates selected from each of the $2mn$ blocks of the type $B_2$ and these candidates are the same across all voter populations. Thus, $|W_P \cap W|\geq 1+m\mu^2-3m\mu+2mn\mu$, for all $P \in \mathcal{P}$, and the \emph{same} winning committee $W$ satisfy the diversity constraints \emph{and} the representation constraints.

\paragraph{Highest scoring committee:}
It remains to be shown that $W$ is the highest scoring committee among all the committees that satisfy the given constraints.

Note that for a given population $P \in \mathcal{P}$, $\forall c \in C$ : $\pos_P(c)=1$ or $\pos_P(c)=2$, $\forall d_{i,j}^{T_4} \in $ Set $U_4^P$, $\forall v \in V$, $\pos_v(c)$ $\succ$ $\pos_v(d_{i,j}^{T_4})$. This holds based on the prerequisite condition that we are interested in committees that satisfy the constraints. Additionally, Set $U_2$ $\succ$ Set $U_3$ $\succ$ $d_{i,j}^{T_4}$, $\forall d_{i,j}^{T_4} \in $ Set $T_4$. This holds because $d_{i,j}^{T_4}$ is either in Set $U_4$ or Set $U_6$, and even after accounting for varying preferences of these two sets, Sets $U_2$ and $U_3$ are always ranked higher. Thus, the contribution of each candidate in Sets $U_2$ and $U_3$ to the total score will be greater than equal to a candidate $d_{i,j}^{T_4}$. As noted earlier, the ties are broken based on ranking. Hence, $W$ will be the highest scoring committee. 

Overall, a yes instance of the VC problem implies a yes instance of the \DiReCWD problem such that the committee $W$ s of size at most $k + m \mu^2 + 2mn\mu - 3m\mu $, $\forall$ $G \in \mathcal{G}$, $|G \cap W|\geq l^D_G$, $\forall$ $P \in \mathcal{P}$, $|W_P \cap W|\geq l^R_P$, and $\mathtt{f}(W) = \max_{W' \in \mathcal{W}} \mathtt{f}(W')$. 

($\Leftarrow$) 
The instance of the \DiReCWD is a yes instance when we have $k + m \mu^2 + 2mn\mu - 3m\mu $ candidates in the committee. This means that each and every group will have at least one of their members in the winning committee $W$, i.e., $|G \cap W|\geq l^D_G$ for all $G \in \mathcal{G}$. Then the corresponding instance of the VC problem is a yes instance as well. This is because the $k$ vertices $x \in X$ that form the vertex cover correspond to the $k$ candidates $c \in A$ that satisfy $|G \cap W|\geq 1$ for all the candidate groups that do not contain any dummy candidates. 

Next, by construction, we know that as a yes instance of the \DiReCWD problem satisfies the diversity constraints, each population's winning committee, $W_P$ for all $P \in \mathcal{P}$, will have at least $l^R_P$ of their members in the winning committee $W$, i.e., $|W_P \cap W|\geq l^R_P$ for all $P \in \mathcal{P}$. Importantly, as the $k + m \mu^2 + 2mn\mu - 3m\mu $-sized committee $W$ also satisfies the diversity constraints, we know that all candidates from Set $U_2$ and Set $U_3$ are selected. Additionally, no candidate from Set $U_4$ of each voter population is selected and \emph{at least one} of the top 2 ranked candidates from each voter population is selected. Hence, this means that the corresponding instance of the VC problem is a yes instance as well. Finally, this committee will be the highest scoring committee among all the committees that satisfy the constraints. This completes the other direction of the proof of correctness.
    
We note that as we are using a separable \csr, computing scores of candidates takes polynomial time. This completes the overall proof.

\end{proof}

The hardness holds even when either all diversity constraints are set to zero ($\forall G \in \mathcal{G}$, $l_G^D=0$) or all representation constraints are set to zero ($\forall P \in \mathcal{P}$, $l_P^R=0$).
Finally, we now show that \DiReCWD is NP-hard even when either all diversity constraints are set to zero ($\forall G \in \mathcal{G}$, $l_G^D=0$) \emph{or} all representation constraints are set to zero ($\forall P \in \mathcal{P}$, $l_P^R=0$). 

\begin{theorem}\label{thm:DiReCWDdivrep31even}
If $\forall \mu \in \mathbb{Z} : \mu\geq 3$ and $\mu$ is an even number, $\forall \pi \in \mathbb{Z} : \pi\geq 1$, and   $\mathtt{f}$ is the monotone, separable function arising from an arbitrary single-winner positional scoring rule, then \DiReCWD 
is NP-hard. 
\end{theorem}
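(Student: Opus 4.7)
The plan is to mirror the reduction used in the proof of Theorem~\ref{thm:DiReCWDdivrep31odd} and modify only those pieces that break when $\mu$ is even rather than odd. As in that proof, I would reduce from vertex cover on 3-regular, 2-uniform (hyper)graphs, introduce one candidate $c_i$ per vertex $x_i$, and use the two block types $B_1$ (consisting of Sets $T_1$, $T_2$, $T_3$) and $B_2$ (consisting of Set $T_4$) of dummy candidates. The voters, voter populations, and rankings (Table~\ref{tab:voterRankingsDivRepConsodd}) carry over verbatim, along with the representation lower bounds $l^R_P = 1 + m\mu^2 - 3m\mu + 2mn\mu$ and the diversity lower bounds ($l^D_G = 2$ on the $T_4$-pair groups that exclude the $(\mu+1)$-th candidate and $l^D_G = 1$ elsewhere). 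The target committee size stays $k + m\mu^2 + 2mn\mu - 3m\mu$, so that the forward and backward implications of the correctness claim are preserved without recomputation.

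The only place the odd-$\mu$ argument fails is the grouping of Set $T_3$ inside each block of type $B_1$: there we needed to partition the $\mu-1$ candidates of $T_3$ into $(\mu-1)/2$ disjoint pairs, creating one group per pair so that each $d_{j,i}^{T_3}$ picks up exactly one further attribute and ends up in $\mu$ groups total. When $\mu$ is even, $\mu - 1$ is odd and no such perfect pairing exists. The main obstacle of the whole argument is therefore purely combinatorial: re-engineering the last attribute of the $T_3$ candidates so that each is in exactly $\mu$ groups, no two attributes coincide, and the forced selection of \emph{all} of $T_3$ in any feasible committee still holds.

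My plan to overcome this is to enlarge each block of type $B_1$ by a single \emph{anchor} dummy candidate $d_j^\ast$, giving Set $T_3$ effectively $\mu$ members for grouping purposes. I would then partition these $\mu$ candidates into $\mu/2$ disjoint pairs and declare each pair a group with lower bound $1$, so that $d_{j,1}^{T_3}, \dots, d_{j,\mu-1}^{T_3}, d_j^\ast$ each gain exactly one further attribute, landing in $\mu$ groups total. To remain consistent with the $\mu-1$ groups each $d_{j,i}^{T_3}$ already shares with $T_2$ candidates, I give $d_j^\ast$ its remaining $\mu-1$ attributes by grouping it individually with the $\mu-1$ members of $T_2$ of that block (an attribute class disjoint from the existing ones, satisfying stipulation (ii) on distinct partitions). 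The committee size and representation lower bound absorb the $m\mu - 3m$ extra anchors as an additive constant, and the Set $U_7$ segment of the voters' rankings is extended by these anchors in the same manner as the $T_2$ candidates, so that anchors are ranked strictly below every $T_1,T_2,T_3,T_4$-candidate used in the score argument.

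With this modification in place, proof of correctness proceeds exactly as in Theorem~\ref{thm:DiReCWDdivrep31odd}: in the forward direction, choosing one $T_1$ candidate, all $\mu-1$ of $T_3$, and now the anchor $d_j^\ast$ per block satisfies every $B_1$ diversity constraint; the $T_4$ selection, the $A$ selection from the vertex cover, and the scoring analysis are unchanged; the representation constraints are met because Sets $U_2, U_3$ (and the augmented $U_7$ position of the anchors) are chosen identically across populations, while the top two positions contribute the $+1$ via the VC. In the backward direction, the anchor's forced inclusion follows from its pair-group constraint with some fixed $T_3$ candidate, and the rest of the argument (extracting a $k$-vertex cover from the $k$ candidates of $A \cap W$) is verbatim. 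I expect the bookkeeping of constants—updating $|C|$, $k$, $l^R_P$, and verifying the positional-score inequality for the anchors—to be the only tedious piece, but no new idea beyond the anchor trick is required.
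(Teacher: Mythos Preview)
Your plan diverges from the paper's route, and as written it does not close.  The difficulty is that the anchor $d_j^\ast$ cannot receive its remaining $\mu-1$ attributes by being paired individually with the $\mu-1$ members of $T_2$ without overloading those $T_2$ candidates.  In the odd-$\mu$ construction each $d_{j,i}^{T_2}$ already sits in exactly $\mu$ groups: one with $d_{j,1}^{T_1}$ and $\mu-1$ with the members of $T_3$.  Your new pairs $\{d_j^\ast,\,d_{j,i}^{T_2}\}$ add a $(\mu+1)$st group to every $T_2$ candidate, so the instance no longer has $\mu$ candidate attributes in the sense of the preliminaries (each attribute must partition \emph{all} candidates, and every candidate lies in exactly one group per attribute).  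Declaring these pairs to live in ``an attribute class disjoint from the existing ones'' does not help: either you have introduced extra attributes beyond $\mu$, or you are reusing an attribute under which each $d_{j,i}^{T_2}$ already has a group.  A second, smaller issue is the backward direction: the pair-group between $d_j^\ast$ and its $T_3$ partner has lower bound $1$, so selecting the $T_3$ partner alone satisfies it, and the anchor is \emph{not} forced into $W$; you would then need the anchor--$T_2$ groups to force the low-ranked $T_2$ candidates in, which upsets the committee-size and score accounting you inherited from the odd case.

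For comparison, the paper sidesteps the parity obstruction differently: it \emph{doubles} the whole instance (two candidates $c_i,\,c_{m+i}$ per vertex, twice as many blocks, committee size $2(k+m\mu^2+2mn\mu-3m\mu)$, $4n^2$ voters, and $l^R_P$ doubled).  Inside each $B_1$ block it pairs only $\mu-2$ of the $T_3$ candidates internally, leaving one unmatched $T_3$ candidate per block; because the doubling guarantees an even number of blocks, the leftover from the block attached to $c_i$ is paired cross-block with the leftover from the block attached to $c_{m+i}$.  No new dummy type is introduced and every candidate ends up in exactly $\mu$ groups, so the correctness argument of Theorem~\ref{thm:DiReCWDdivrep31odd} carries over with only arithmetic changes.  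If you want to salvage a single-copy reduction, the natural fix is to pair the leftover $T_3$ candidates across different $B_1$ blocks rather than inventing an anchor; since the input graph is $3$-regular, $m$ is even and hence the number $m(\mu-3)$ of $B_1$ blocks is even, so such a cross-block matching always exists.
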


\begin{proof}[Proof Sketch] The proof follows from the proof of Theorem~\ref{thm:DiReCWDdivrep31odd}. Hence, we only discuss the major changes in the construction of the proof.

\paragraph{\underline{PART I: Construction}}

\paragraph{Candidates:} We have two candidates $c_i$ and  $c_{m+i}$ for each vertex $x_i \in X$, and 
$2\cdot (2 \mu^2 m - 7 \mu m + 2 \mu m n + 2 m n + 3 m)$ dummy candidates $d \in D$ where $m$ corresponds to the number of vertices in the graph $H$, $n$ corresponds to the number of edges in the graph $H$, and $\mu$ is a positive, even integer (hint: the number of candidate attributes). We divide the dummy candidates into two types of blocks in line with the proof of Theorem~\ref{thm:DiReCWDdivrep31odd} but there twice the number of blocks and hence, by transitivity, twice the number of candidates.

\paragraph{Committee Size:} We set the target committee size to be $2\cdot(k + m \mu^2 + 2mn\mu - 3m\mu) $.

\paragraph{Candidate Groups:} We divide the candidates in $C$ into groups such that each candidate is part of $\mu$ groups as there are $\mu$ candidate attributes. The division is in line with the proof of Theorem~\ref{thm:DiReCWDdivrep31odd} but each set contains twice the number of candidates and there is one exception. For candidates in Block Type $B_2$:

\begin{itemize}
    \item For each dummy candidate $d_j^{T_3} \in T_3$, it is in $\mu-1$ groups, individually with each of $\mu-1$ dummy candidates, $d_i^{T_2} \in T_2$, as described in the previous point. Next, note that when $\mu$ is an even number, $\mu-1$ is an odd number, which means Set $T_3$ has an \emph{odd} number of candidates.  
    We randomly divide $\mu-2$ candidates into two partitions. Then, we create $\frac{\mu-2}{2}$ groups over one attribute where each group contains two candidates from Set $T_3$ such that one candidate is selected from each of the two partitions without replacement. Thus, each pair of groups is mutually disjoint. Hence, each dummy candidate $d_j^{T_3} \in T_3$ is part of exactly one group that is shared with exactly one another dummy candidate $d_{j'}^{T_3} \in T_3$ where $j \neq j'$. Overall, this construction results in one attribute and one group for all but one dummy candidate $d_j^{T_3} \in T_3$, which results into a total of $\mu$ attributes and $\mu$ groups for these $\mu-2$ candidates.
    This is because $\frac{\mu-2}{2}$ groups can hold $\mu-2$ candidates. Hence, one candidate still has $\mu-1$ attributes and is part of $\mu-1$ groups. If this block of dummy candidates is for candidate $c_i \in A$, then another corresponding block of dummy candidates for candidate $c_{m+i} \in A$ will also have one candidate $d_{z}^{T'_3} \in T'_3$ who will have $\mu-1$ attributes and is part of $\mu-1$ groups. We group these two candidates from separate blocks. Hence, now that one remaining candidate also has $\mu$ attributes and is part of $\mu$ groups. As there is always an even number of candidates in set $A$ ($|A|=2m$), such cross-block grouping of candidates among a total of $(\mu-3) \cdot 2m$ blocks, also an even number, is always possible.
\end{itemize}

\paragraph{Diversity Constraints:} We set the lower bound for each candidate group in line with the proof of Theorem~\ref{thm:DiReCWDdivrep31odd}.

\paragraph{Voters and Preferences:} We now introduce $4n^2$ voters, $4n$ voters for each edge $e \in E$. More specifically, an edge $e \in E$ connects vertices $x_i$ and $x_j$. Then, the corresponding $4n^2$ voters $v \in V$ rank the candidates in line with the proof of Theorem~\ref{thm:DiReCWDdivrep31odd} but each set now consists of twice the number of candidates. Specifically, the major change is in Set $U_1$ as it now consists of $c_i$, $c_j$, $c_{i+m}$, and $c_{j+m}$.

\paragraph{Voter Populations:} We divide the voters into populations in line with the proof of Theorem~\ref{thm:DiReCWDdivrep31odd}.

\paragraph{Representation Constraints:} We set the lower bound for each voter population as follows: $\forall P \in \mathcal{P}$, $l^R_P$ = $2\cdot(1+m\mu^2-3m\mu+2mn\mu)$.

This completes our construction for the reduction, which is a polynomial time reduction in the size of $n$ and $m$ in line with the proof of Theorem~\ref{thm:DiReCWDdivrep31odd}. 

\paragraph{\underline{PART II: Proof of Correctness}}

\begin{claim}

We have a vertex cover $S$ of size at most $k$ that satisfies $e \cap S\neq\phi$ $\forall$ $e \in E$ if and only if we have a committee $W$ of size at most $2\cdot(k + m \mu^2 + 2mn\mu - 3m\mu) $ such that $\forall$ $G \in \mathcal{G}$, $|G \cap W|\geq l^D_G$, $\forall$ $P \in \mathcal{P}$, $|W_P \cap W|\geq l^R_P$, and $\mathtt{f}(W) = \max_{W' \in \mathcal{W}} \mathtt{f}(W')$\footnote{The ties are broken based on rankings of the candidate such that higher ranked candidates are chosen over the lower ranked candidates.} where $\mathcal{W}$ is a set of committees that satisfy all constraints. 
\end{claim}

The proof of correctness easily follows from the proof of correctness of Theorem~\ref{thm:DiReCWDdivrep31odd}. 
\end{proof}

In line with the previous theorem, our reduction holds for a weaker result such that the hardness holds even when either all diversity constraints are set to zero ($\forall G \in \mathcal{G}$, $l_G^D=0$) or all representation constraints are set to zero ($\forall P \in \mathcal{P}$, $l_P^R=0$).

\section{Conclusion}
\label{sec:conc}

We  classified the complexity of finding a diverse and representative committee using a monotone, separable positional multiwinner voting rule. In doing so, we established the close association between \DiReCWD and the vertex problem problem. This can lead to interesting future work about having an understanding of the existence of a diverse and representative outcome and complexity of finding one. Specifically, we can study if \DiReCWD under certain realistic assumptions is PPAD-complete or not based on knowledge about the vertex cover problem and its possible PPAD-completeness \cite{kiraly2013ppad}. Additionally, this association can be used to find realistic settings where the model becomes tractable and validate it by giving examples of real-world datasets where such algorithms may work.

\section*{Acknowledgments}
I am grateful to Julia Stoyanovich for her insights. I thank Markus Brill, Benny Kimelfeld, and Phokion G. Kolaitis for many interesting and thought-provoking questions.

\bibliography{references_original}

\begin{thebibliography}{20}
\providecommand{\natexlab}[1]{#1}

\bibitem[{Alimonti and Kann(1997)}]{alimonti1997hardness}
Alimonti, P.; and Kann, V. 1997.
\newblock Hardness of approximating problems on cubic graphs.
\newblock In \emph{Italian Conference on Algorithms and Complexity}, 288--298.
  Springer.

\bibitem[{Bansal and Khot(2010)}]{bansal2010inapproximability}
Bansal, N.; and Khot, S. 2010.
\newblock Inapproximability of hypergraph vertex cover and applications to
  scheduling problems.
\newblock In \emph{International Colloquium on Automata, Languages, and
  Programming}, 250--261. Springer.

\bibitem[{Baumeister and Rothe(2012)}]{baumeister2012taking}
Baumeister, D.; and Rothe, J. 2012.
\newblock Taking the final step to a full dichotomy of the possible winner
  problem in pure scoring rules.
\newblock \emph{Information Processing Letters}, 112(5): 186--190.

\bibitem[{Betzler and Dorn(2010)}]{betzler2010towards}
Betzler, N.; and Dorn, B. 2010.
\newblock Towards a dichotomy for the possible winner problem in elections
  based on scoring rules.
\newblock \emph{Journal of Computer and System Sciences}, 76(8): 812--836.

\bibitem[{Brams(1990)}]{brams1990constrained}
Brams, S.~J. 1990.
\newblock Constrained approval voting: A voting system to elect a governing
  board.
\newblock \emph{Interfaces}, 20(5): 67--80.

\bibitem[{Bredereck et~al.(2018)Bredereck, Faliszewski, Igarashi, Lackner, and
  Skowron}]{bredereck2018multiwinner}
Bredereck, R.; Faliszewski, P.; Igarashi, A.; Lackner, M.; and Skowron, P.
  2018.
\newblock Multiwinner elections with diversity constraints.
\newblock In \emph{AAAI}.

\bibitem[{Celis, Huang, and Vishnoi(2018)}]{celis2017multiwinner}
Celis, L.~E.; Huang, L.; and Vishnoi, N.~K. 2018.
\newblock Multiwinner voting with fairness constraints.
\newblock \emph{In IJCAI}.

\bibitem[{Chakraborty and Kolaitis(2021)}]{chakraborty2021classifying}
Chakraborty, V.; and Kolaitis, P.~G. 2021.
\newblock Classifying the Complexity of the Possible Winner Problem on Partial
  Chains.
\newblock In \emph{Proceedings of the 20th International Conference on
  Autonomous Agents and MultiAgent Systems}, 297--305.

\bibitem[{Cheng et~al.(2019)Cheng, Jiang, Munagala, and Wang}]{cheng2019group}
Cheng, Y.; Jiang, Z.; Munagala, K.; and Wang, K. 2019.
\newblock Group fairness in committee selection.
\newblock In \emph{Proceedings of the 2019 ACM Conference on Economics and
  Computation}, 263--279.

\bibitem[{Feige(2003)}]{feige2003vertex}
Feige, U. 2003.
\newblock Vertex cover is hardest to approximate on regular graphs.
\newblock \emph{Technical Report MCS03--15}.

\bibitem[{Garey and Johnson(1979)}]{garey1979computers}
Garey, M.~R.; and Johnson, D.~S. 1979.
\newblock \emph{Computers and intractability}, volume 174.
\newblock Freeman San Francisco.

\bibitem[{Hershkowitz et~al.(2021)Hershkowitz, Kahng, Peters, and
  Procaccia}]{hershkowitzdistrict}
Hershkowitz, D.~E.; Kahng, A.; Peters, D.; and Procaccia, A.~D. 2021.
\newblock District-fair participatory budgeting.
\newblock \emph{arXiv preprint arXiv:2102.06115}.

\bibitem[{Kenig(2019)}]{kenig2019complexity}
Kenig, B. 2019.
\newblock The complexity of the possible winner problem with partitioned
  preferences.
\newblock In \emph{Proceedings of the 18th International Conference on
  Autonomous Agents and MultiAgent Systems}, 2051--2053.

\bibitem[{Kir{\'a}ly and Pap(2013)}]{kiraly2013ppad}
Kir{\'a}ly, T.; and Pap, J. 2013.
\newblock PPAD-completeness of polyhedral versions of Sperner’s Lemma.
\newblock \emph{Discrete Mathematics}, 313(15): 1594--1599.

\bibitem[{Konczak and Lang(2005)}]{konczak2005voting}
Konczak, K.; and Lang, J. 2005.
\newblock Voting procedures with incomplete preferences.
\newblock In \emph{Proc. IJCAI-05 Multidisciplinary Workshop on Advances in
  Preference Handling}, volume~20. Citeseer.

\bibitem[{Potthoff(1990)}]{potthoff1990use}
Potthoff, R. 1990.
\newblock Use of linear programming for constrained approval voting.
\newblock \emph{Interfaces}, 20(5): 79--80.

\bibitem[{Relia(2022)}]{relia2021dire}
Relia, K. 2022.
\newblock DiRe Committee : Diversity and Representation Constraints in
  Multiwinner Elections.
\newblock In \emph{IJCAI-22}.

\bibitem[{Shrestha and Yang(2019)}]{shrestha2019fairness}
Shrestha, Y.~R.; and Yang, Y. 2019.
\newblock Fairness in algorithmic decision-making: Applications in multi-winner
  voting, machine learning, and recommender systems.
\newblock \emph{Algorithms}, 12(9): 199.

\bibitem[{Uckelman et~al.(2009)Uckelman, Chevaleyre, Endriss, and
  Lang}]{uckelman2009representing}
Uckelman, J.; Chevaleyre, Y.; Endriss, U.; and Lang, J. 2009.
\newblock Representing utility functions via weighted goals.
\newblock \emph{Mathematical Logic Quarterly}, 55(4): 341--361.

\bibitem[{Xia and Conitzer(2011)}]{xia2011determining}
Xia, L.; and Conitzer, V. 2011.
\newblock Determining possible and necessary winners given partial orders.
\newblock \emph{Journal of Artificial Intelligence Research}, 41: 25--67.

\end{thebibliography}

\end{document}